\begin{document}

\title{Existence of a Convex Polyhedron with Respect to the Given Radii\thanks{This research was supported by Chiang Mai University, Thailand.}
}

\author{Supanut Chaidee         \and
        Kokichi Sugihara 
}


\institute{Research Center in Mathematics and Applied Mathematics, Department of Mathematics, Faculty of Science, Chiang Mai University \at
	239 Huaykaew Road, Suthep District, Muang, Chiang Mai 50200, Thailand \\
	Tel.: +66-53-943326\\
	\email{supanut.c@cmu.ac.th}           
	\and
	Meiji Institute for Advanced Study of Mathematical Sciences, Meiji University \at
	4-21-1 Nakano, Nakano-ku, Tokyo, Japan, 164-8525\\
	\email{kokichis@meiji.ac.jp} 
}

\date{Received: date / Accepted: date}

\maketitle

\begin{abstract}
Given a set of radii measured from a fixed point, the existence of a convex configuration with respect to the set of distinct radii in the two-dimensional case is proved when radii are distinct or repeated at most four points. However, we proved that there always exists a convex configuration in the three-dimensional case. In the application, we can imply the existence of the non-empty spherical Laguerre Voronoi diagram.
\keywords{Convex polygon \and Convex configuration \and Spherical Laguerre Voronoi diagram}
\end{abstract}

\section{Introduction}
Suppose that we are given a set of $n$ points $\mathcal{V}=\{v_1, v_2, ..., v_n\}$ in two-dimensional and three-dimensional spaces. One of the fundamental questions in computational geometry is to consider the convexity of the given set, such as computing the convex hull of $\mathcal{P}$. When $\mathcal{V}$ is finite, the convex hull of a set $\mathcal{V}$ is a polygon in the two-dimensional case and polyhedron in the three-dimensional case. The problem of algorithmic construction of a convex hull was initially addressed by Preperata \cite{Preperata1977}.

It is well-known that the convex hull is the primitive object in the computational geometry. For example, the construction of spherical Voronoi diagram and spherical Laguerre Voronoi diagram, as defined in \cite{Sugihara2002}, uses the central projection of 3D convex hull onto the sphere to generate the Delaunay diagrams as presented in \cite{Sugihara2000}. 

In the case of the spherical Voronoi diagram, the points for the computed 3D convex hull are on the sphere. Therefore, the central projection of the 3D convex hull consists of all Delaunay triangulation of the diagram. However, the spherical Laguerre Delaunay diagram construction is different to the ordinary spherical Voronoi diagram in such a way that each generator contains its weight, and the points for generating the convex hull can be shifted over the sphere. Therefore, the convex hull of those points may include some points inside the constructed convex hull. Since the diagram can be constructed from the central projection of the convex hull onto the sphere, the Laguerre cell corresponding to the hidden point is empty, which is a dilemma of the spherical Laguerre Voronoi diagram.

Suppose that there is a set of weights of the spherical Laguerre Voronoi diagram $W=\{w_1, ..., w_n\}$. We would like to find the location of generator position $\mathcal{P}=\{p_1, ..., p_n\}$ on the unit sphere $S^2$ in such a way that the no cell of the generated spherical Laguerre Voronoi diagram is empty. This problem can be transformed to the following problem.

Let $\mathcal{R}=\{r_1, ..., r_n\}$ be a set of radii from the origin $O$. We would find the configuration of all points $\mathcal{V}=\{v_1, ..., v_n\}$ such that all of the points are vertices of a convex polyhedron. 

We firstly review the similar and related problems to our study.

\subsection{Related works}
To consider the literature, we primarily focus on the problems of convexification and convex configuration in the two-dimensional case.

Suppose that there is a closed chain composing of the vertices and links. The \textit{reconfiguration problem} is a problem to consider whether or not the given configuration can be transformed into another configuration. Lenhart and Whitsides \cite{Lenhart1995} considered the problem when the lengths of links are fixed, and the reconfiguration is allowed to across other links. This result also proved that every polygon could be convexified, i.e., the edge lengths of resulting convex polygon is preserved.

The more specified problem to the reconfiguration problem is the \textit{polygon convexification} problem, a problem to transform a configuration of the simple polygon in the initial stage to a convex polygon. Everett et al. \cite{Everett1998} considered the polygon convexification problem in the case of star-shaped polygon and proved that every star-shaped polygon in general position could be convexified. In this problem, the lengths of the links are not necessary to be fixed.

One of the famous problems called the carpenter's rule problem is to ask whether we can continuously move a simple polygon in such a way that all vertices are in convex position. Aichholzer et al. \cite{Aich2001}, Connelly et al. \cite{Connelly2003} studied the problem to convexify the polygonal cycle by employing a continuous motion to be a convex closed curve such that no links cross each other during the motion. Especially, the study in \cite{Aich2001} defined the term \textit{convex configuation} as the configuration of a convex polygon where edge links are fixed.

In the three-dimensional space, based on our observation, the configuration problem of points in 3D to be a convex set, has not clearly identified yet. However, in the general dimension, the \textit{convex hull frame problem}, known as \textit{redundancy removal problem}, is a problem to compute vertex description of the given set of points. That is, to justify whether a point is in a convex hull of the given set. If it is inside the convex hull, we remove that point. 

Clarkson \cite{Clarkson1994}, Ottman, et al. \cite{Ottmann1994} considered the algorithms for testing whether a given point is inside the convex hull or not. Dula and Helgason \cite{Dula1996} studied the problem by identifying the extreme points (or vertices) of the convex hull of the given points using the linear programming viewpoint. Other similar problems were the vertex enumeration of the convex hull as presented by in \cite{Kalantari2015}.

With the basic problem of the convex hull frame problem, the closest issues to the Voronoi diagram in Laguerre geometry were firstly addressed by Aurenhammer \cite{Aurenhammer1987} and Imai et al. \cite{Imai1985}. In \cite{Imai1985}, the emptiness of the Laguerre Voronoi cell in the Euclidean space $\mathbb{R}^d$ was identified that the Voronoi polygon of the generating circle $c_k$ is empty if the center of circle $c_k$ is not on the boundary of the convex hull.

In the spherical case, assume that all points were on or close to a sphere. Carili et al. in \cite{Caroli2010} established the sufficient condition under which no point is hidden in other planes of the convex hull with respect to other points.

\subsection{Problem statement and our contribution}
In this study, we investigate the modification of the previous convex configuration problem. Suppose that a set of radii is given with a fixed point. We would find the existence of a convex polyhedron whose vertices correspond to the given set of radii.

In two dimensional case, the convex configuration of points is a polygon whose the edge lengths of a polygon are allowed to be moved, and fixed for the radii, whereas the problems in \cite{Lenhart1995,Everett1998,Connelly2003} fixed the link lengths. 

The problem in the two-dimensional case is generalized to the three-dimensional case, i.e. we find a convex polyhedron satisfying the given radii set. The main motivation of this study is initiated from the non-empty property of the spherical Laguerre Voronoi diagram which the problem can be simplified to the problem of the modified convex configuration problem in the three-dimensional space. The existence of the convex configuration can guarantee that for any set of weights, we can always find a spherical Laguerre Voronoi diagram whose all Voronoi cells are nonempty, which is the different approach to the problems stated in \cite{Clarkson1994,Ottmann1994,Dula1996,Kalantari2015}. 

This paper is organized as follows. In Section 2, the notation, definitions, and the formulation of problems are provided. The existence of a convex polygon which is a convex configuration in the two-dimensional case is discussed in Section 3. In Section 4, the existence of the convex configuration in the three-dimensional is proved. The application of the problem to the spherical Laguerre Voronoi diagram, which answers the question from the motivation of the study, is described in Section 5. The concluding remarks and future study will be clarified in the last section.

\section{Preliminaries}

In this section, we define the notations and the necessary definitions. After that, we formulate the problem.
\subsection{Notations and Definitions}
Firstly, we mainly focus on the definitions in the two-dimensional case. The definitions in the three-dimensional case will be provided in the later part.

Let $V=\{v_1, ..., v_n\}$ be a set of vertices which is arranged counterclockwise on the plane. An edge $e_i=(v_{i}, v_{i+1})$ is a segment joining vertices $v_i$ and $v_{i+1}$ with the length $l_i:=d(v_{i}, v_{i+1})$, where $d(v_{i}, v_{i+1})$ denotes the distance between $v_i$ and $v_{i+1}$.

A \textit{chain} is a straight line graph formed by the set of edges $\mathcal{E}=\{e_1, ..., e_{n-1}\}$. A \textit{polygon} $P$ is a closed region bounded by a closed chain generated from the set of edges $\{e_1, ..., e_{n}\}$, where $e_i=(v_i, v_{i+1})$. A polygon $P$ is said to be \textit{simple} if the chain does not intersect itself except the vertices of $P$.

Let $e_i=(v_i, v_{i+1})$ and $e_{i+1}=(v_{i+1}, v_{i+2})$ be adjacent edges of a polygon $P$ whose common vertex is $v_{i+1}$. The angle between $e_i$ and $e_{i+1}$ is denoted by $\angle v_iv_{i+1}v_{i+2}$ and is measured clockwise  from the segment $\overline{v_iv_{i+1}}$.

A polygon $P$ is said to be \textit{convex} if and only if for any point $p, q$ in the polygon $P$, a segment joining $p$ and $q$ is in $P$. Also, for each angle $\angle v_iv_{i+1}v_{i+2}$ of $P$, where $i=1, ..., n, v_{i+1}=v_1, v_{i+2}=v_2$, $\angle v_iv_{i+1}v_{i+2} \leq \pi$ if and only if $P$ is convex. Remark that it is impossible that $\angle v_iv_{i+1}v_{i+2} = \pi$ for all $i$. For the special case, a polygon $P$ is said to be a \textit{strictly convex polygon} if and only if $\angle v_iv_{i+1}v_{i+2} < \pi$ for all $i$.

For a given edge length set $L=\{l_1, ..., l_n\}$, a \textit{convex configuration} of edge lengths is a convex polygon whose length of edges satisfy the set $L$ with counterclockwise order. 

The \textit{radius} $r_i$ of a polygon vertex $v_i$ is defined as the distance between the vertex $v_i$ and the given fixed point. Without loss of generality, we assume that the origin $O$ is such the fixed point.

For a given straight line $\ell$, an arbitrary half-plane with respect to the line $\ell$ is denoted by $H(\ell)$. The half-plane including the origin is written as $H^0(\ell)$.

Next, we generalize the mentioned definitions in the three-dimensional spaces. 

Assume that $\mathcal{V}=\{v_1, ..., v_n\}$ is a set of points in the three-dimensional spaces. In our context, the \textit{convex polyhedron} is a convex hull of a set $\mathcal{V}$. We can also construct a polyhedron from the intersection of a finite number of half-spaces. In this study, we focus on the polyhedron which is formed from the bounded intersection of half-spaces.

Similar to the two-dimensional case, without loss of generality, the radius $r_i$ of a polyhedron vertex $v_i$ is defined by the Euclidean distance between $v_i$ and the origin $O$.

In spherical geometry, we consider a unit sphere $S^2$ where the center is located at the origin. For $p, q\in S^2$, let $\tilde{d}(p, q)$ be the geodesic distance between $p$ and $q$ defined by $$\tilde{d}(p, q)=\arccos(p\cdot q)\leq \pi.$$
For a fixed point $q$ on the surface of $S^2$, the spherical circle is defined as 
$$\tilde{c}_q=\{p\in S^2: \tilde{d}(p, q)\leq r_i\}$$
which is the circle where the center is at the point $q$ with radius $r_i$ and $0\leq r_i < \pi/2$.

\subsection{Problem Formulations}
Assume that the set of radii $\mathcal{R}=\{r_1, r_2, ..., r_n\}$ is given. We place a point $v_i$ on the plane in a way that the distance between $O$ and $v_i$ is the radius $r_i$. Therefore, a simple polygon $P$ is formed from the counterclockwise sequence of vertices $\{v_1, ..., v_n\}$ generated by the sequence of radii $\mathcal{R}$.

In this study, we are interested in the following question.
For a given set of radii $\mathcal{R}=\{r_1, r_2, ..., r_n\}$, does there exist a convex configuration of vertices set $V=\{v_1, ..., v_n\}$ including $O$ with respect to the set of radii $\mathcal{R}$? To avoid the confusion with the problems in \cite{Aich2001,Lenhart1995}, the convex configuration in this context means that the radius $r_i$ is fixed for all $i$, and length of edge $l_i:=d(v_{i}, v_{i+1})$ are allowed to be adjusted with respect to the position of $v_i$ and $r_i$.

In the three-dimensional case, the concept of convex configuration in our context can be considered similar to the two-dimensional case. We assume that a vertex $v_i$ is in $\mathbb{R}^3$ with the Euclidean distance between $O$ and $v_i$, say $r_i$. The convex configuration of the three-dimensional case is defined by the existence of a convex polyhedron whose all of the points in the set $\mathcal{V}$ are vertices of the convex polyhedron. Therefore, the problem in three-dimensional space is to consider the existence of a convex configuration of $v_1, ..., v_n$ with respect to the given radii set $\mathcal{R}$.

\section{Existence of a Convex Polygon in the Plane}

For a set $V=\{v_1, v_2, ..., v_n\}$ of $n>3$ vertices in the plane, we would like to investigate the convexity of the constructed polygon. 

For a given sequence of radii $\mathcal{R}$, if the radii are distinct, the convex configuration can always exist by the following theorem.

\begin{lemma}\label{ThmDistinct}
	Let $\mathcal{R}=\{r_1, ..., r_n\}$ be a given radii set such that $r_i > 0$ and $r_i \neq r_j$ for all $i, j$. Assume that $V=\{v_1, ..., v_n\}$ is a set of vertices induced by $\mathcal{R}$.
	There exists a convex configuration of $\mathcal{V}$ with respect to the radii set $\mathcal{R}$.
\end{lemma}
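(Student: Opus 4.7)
The plan is to proceed by induction on $n$, with the radii sorted as $r_1<r_2<\cdots<r_n$ (permissible because $\mathcal{R}$ is a set of distinct values) and placed in this order counterclockwise. At each step, the vertex newly inserted will be the one with the \emph{largest} remaining radius, which is geometrically the easiest to accommodate because its wedge-shaped admissible region on the outside of the current polygon extends unboundedly far from the origin.

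For the base case $n=3$, I place $v_i$ at angle $2\pi(i-1)/3$ with $|v_i|=r_i$. Any three non-collinear points form a convex triangle, and because the three angular positions span $[0,2\pi)$ symmetrically in thirds, the origin lies strictly inside the triangle (verifiable via explicit positive barycentric coordinates).

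For the inductive step, assume there exists a convex polygon $P_{n-1}$ with counterclockwise vertices $v_1,\ldots,v_{n-1}$ of the prescribed radii containing $O$ in its interior. Let $W$ denote the open region in which a new vertex $v_n$ may be inserted between $v_{n-1}$ and $v_1$: it is the intersection of three open half-planes, namely the outer side of the edge $v_{n-1}v_1$ (the side not containing $O$), the inner side of the line through $v_{n-2}v_{n-1}$, and the inner side of the line through $v_1v_2$. A direct check of the three relevant cross-product inequalities shows that, for any $p\in W$, replacing the edge $v_{n-1}v_1$ by the edges $v_{n-1}p$ and $pv_1$ yields a convex polygon whose interior still contains $O$. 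The key geometric fact is that $W$ is unbounded and contains a point at every distance greater than $d$, where $d$ is the perpendicular distance from $O$ to the line through $v_{n-1}v_1$; since $d\le\min(r_{n-1},r_1)<r_n$, the circle of radius $r_n$ about $O$ meets the interior of $W$, and we take $v_n$ at any such intersection point.

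The main obstacle is establishing this last geometric fact, that every radius in $(d,\infty)$ is realized inside $W$. I would verify it by observing that for each direction $\theta$ in the angular arc between $v_{n-1}$ and $v_1$ as seen from $O$, the ray from $O$ at angle $\theta$ exits $P_{n-1}$ through the edge $v_{n-1}v_1$ at distance $\ge d$ and then enters the interior of $W$, attaining continuously every distance in $(d,\infty)$ as it proceeds outward. The remaining items (strict left turns at $v_{n-1}$, $v_n$, and $v_1$, together with $P_n\supset P_{n-1}$ keeping $O$ in the interior) reduce to routine sign computations on cross products.
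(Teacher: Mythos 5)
The inductive step has a genuine gap: the ``key geometric fact'' that $W$ is unbounded and contains a point at every distance in $(d,\infty)$ is false for a general convex $P_{n-1}$, and the verification you sketch does not establish it. A ray from $O$ through the edge $v_{n-1}v_1$ does enter the region beyond that edge, but it does not remain inside $W$: it leaves $W$ as soon as it crosses the extension of the line through $v_{n-2}v_{n-1}$ or of the line through $v_1v_2$. If those two extensions converge and meet at a point $q$ on the outer side of $v_{n-1}v_1$ --- as happens, for instance, for a regular pentagon, where the admissible insertion region between two adjacent vertices is the bounded ``star-point'' triangle of the pentagram --- then $W$ is a bounded triangle and $\sup_{p\in W}\lVert p\rVert=\lVert q\rVert<\infty$. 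Your inductive hypothesis asserts only that \emph{some} convex $P_{n-1}$ with the prescribed radii contains $O$; nothing in it prevents $W$ from being such a bounded triangle whose farthest point lies closer to $O$ than $r_n$, in which case the circle of radius $r_n$ misses $W$ and the largest vertex cannot be inserted. Repairing this requires strengthening the invariant carried through the induction (e.g., guaranteeing that the insertion region adjacent to $v_1$ stays unbounded, or confining all vertices placed so far to a controlled angular sector), which is exactly the nontrivial part of the argument and is currently missing. The base case and the bound $d\le\min(r_1,r_{n-1})<r_n$ are fine.

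For comparison, the paper sidesteps the difficulty by dispensing with induction: it places the $n-1$ vertices of largest radius collinearly on a single chord $\ell$ chosen to meet every concentric circle except the smallest, and puts the smallest-radius vertex on the opposite side of $O$ along the line through $O$ and the midpoint of the chord, so that $O$ lies in the triangle formed by the two extreme chord points and the last vertex. Convexity is then immediate (the paper's notion of convex configuration here tolerates collinear vertices; strictness is obtained afterwards by a separate perturbation argument).
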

\begin{proof}
	Without loss of generality, we order the set $\mathcal{R}$ as the descending order, i.e. $r'_1 > r'_2 > ... > r'_n$. Therefore, the set $\{r'_1, r'_2, ..., r'_n\}$ is the strictly decreasing sequence.
	
	We construct a sequence of concentric circles $\mathcal{C}=\{C_1, ..., C_n\}$ such that $C_i=C(O,r'_i)$ is a circle with radius $r'_i$ where the center is $O$.
	
	Since $r_i$ are distinct positive numbers, there exists a line $\ell$ passing through all concentric circles $C_1, C_2, ..., C_{n-1}$, but does not pass through the circle $C_{n}$. Let $\ell^{\perp}$ be a perpendicular line of $\ell$ at $O$. Remark that the circle $C_n$ is laid in a half-plane of $H(\ell)$
	
	With the line $\ell^{\perp}$, choose an arbitrary half-plane $H(\ell^{\perp})$. The vertices $v_1, v_2$ $, .., v_{n-1}$ are chosen by the the intersection of the circle $C_i$ for all $i=1, ..., n-1$, and the line $\ell$ which are laid inside the half-plane $H(\ell^{\perp})$. 
	
	
	Let $M$ be a midpoint of the segment $v_1$ and $v_{n-1}$. Draw a line $MO$. Then the last vertex $v_n$ is chosen at the intersection of $MO$ and the circle $C_n$ which is in the other half-plane $H(\ell^{\perp})$, as shown in Figure \ref{fig02-concircs}. Hence, $O$ is in the triangle $\triangle v_1v_{n-1}v_n$ which implies that $O$ is laid inside the polygon constructed in the processes of vertices $\{v_1,v_2,...,v_n\}$. This concludes the proof of the existence of the convex configuration.
	
	\begin{figure}[ht]
		\begin{center}
			\includegraphics[scale=1]{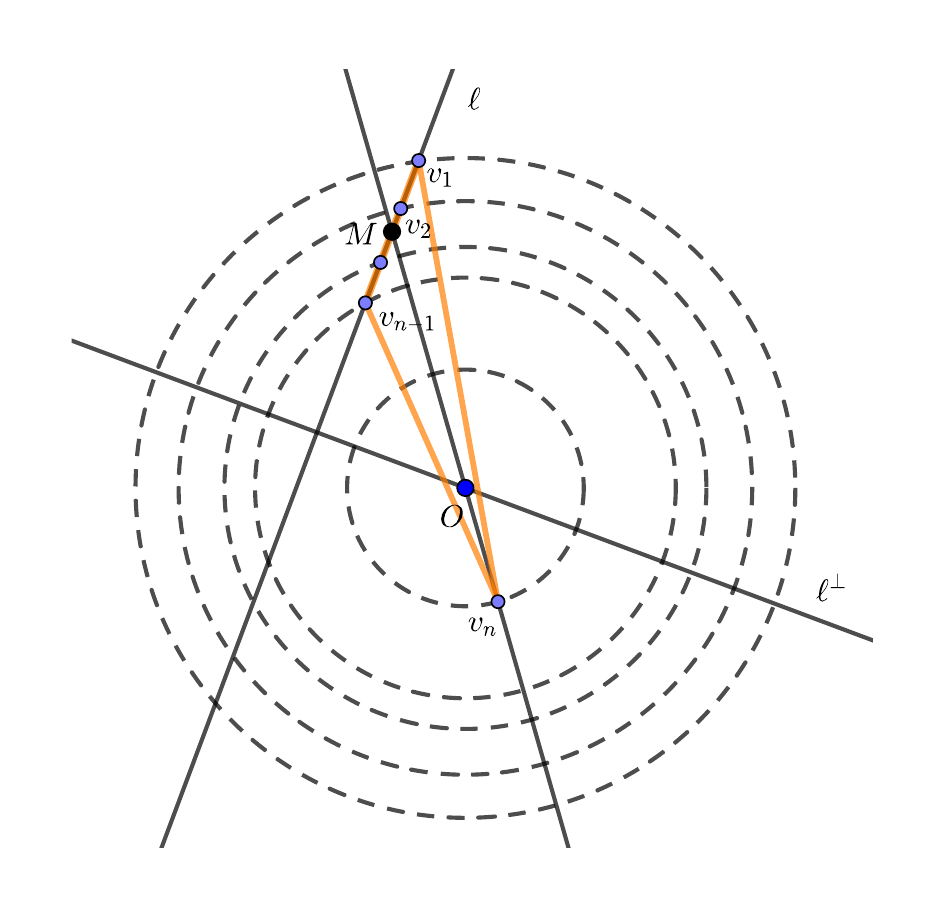}
		\end{center}
		\caption{The construction of a convex polygon with respect to the given distinct radii}\label{fig02-concircs}
	\end{figure}
	
\end{proof}

Remark that in Theorem \ref{ThmDistinct}, the vertices $v_1, ..., v_n$ are allowed to be collinear. In the case of strictly convex configuration, we can perturb the vertices to be non-collinear. Therefore, the following theorem is obtained.

\begin{theorem}\label{LemStrictlyConv}
	For a given distinct positive radii set $\mathcal{R}=\{r_1, ..., r_n\}$ with induced vertices set $V=\{v_1, ..., v_n\}$,
	There exists a strictly convex configuration of $V$ with respect to the radii set $\mathcal{R}$.
\end{theorem}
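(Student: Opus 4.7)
The plan is to start from the configuration $\{v_1^0, \ldots, v_n^0\}$ built in the proof of Lemma~\ref{ThmDistinct} and perturb its collinear interior vertices. Recall that in that configuration $v_1^0, \ldots, v_{n-1}^0$ all lie on the line $\ell$, and $v_n^0$ lies on the opposite side of $\ell^{\perp}$; in coordinates where $\ell$ is the horizontal line $y=d$ with $d>0$, $v_n^0$ is strictly below $\ell$. The polygon is convex, but at each ``middle'' vertex $v_2^0,\ldots,v_{n-2}^0$ the three consecutive vertices are collinear on $\ell$, so the interior angle there equals exactly $\pi$; only these angles need to be strictly reduced.

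I would keep $v_1^0$, $v_{n-1}^0$, $v_n^0$ fixed and, for each middle index $2\le i\le n-2$, replace $v_i^0$ by $v_i(\epsilon) = R_{\epsilon c_i}(v_i^0)$, where $R_\theta$ is the counterclockwise rotation about $O$ through angle $\theta$, $(c_i)$ is a sequence of positive coefficients to be selected, and $\epsilon>0$ is a small parameter. The rotation preserves $|v_i|=r'_i$, so the prescribed radii are still realized. The counterclockwise direction moves $v_i$ off $\ell$ on the side \emph{opposite} to $v_n^0$: to first order in $\epsilon$, the ordinate of $v_i$ increases by $\epsilon c_i\sqrt{(r'_i)^2-d^2}$ and the abscissa decreases by $\epsilon c_i d$, so the horizontal ordering $x_1>x_2>\cdots>x_{n-1}$ is preserved for small $\epsilon$. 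I would choose the $c_i$ so that the resulting heights $y_i$, combined with the boundary values $y_1=y_{n-1}=d$, form a strictly concave function of the (nearly unchanged) abscissae $x_i^0$; for instance, demanding $y_i = -a(x_i^0-\bar x)^2 + b$ for suitable $a>0$, $b$, $\bar x$ determines the $c_i$ uniquely.

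Strict convexity of the perturbed polygon then splits into two checks. At the three corners $v_1^0$, $v_{n-1}^0$, $v_n^0$ the interior angles were already strictly less than $\pi$ in Lemma~\ref{ThmDistinct} (the three points form a non-degenerate triangle with two vertices on $\ell$ and one strictly below), so by continuity they remain strictly less than $\pi$ for all sufficiently small $\epsilon$. At a middle vertex $v_i(\epsilon)$, the interior angle is strictly less than $\pi$ iff $v_i(\epsilon)$ lies strictly above the chord $v_{i-1}(\epsilon)\,v_{i+1}(\epsilon)$ on the side opposite to $v_n^0$, which in coordinates is exactly the strict three-point concavity inequality for the sequence $(x_i,y_i)$. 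A first-order expansion in $\epsilon$ reduces this to the strict concavity built into the profile $(c_i)$, and the $O(\epsilon^2)$ remainder is absorbed by taking $\epsilon$ small enough.

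The main obstacle is coordinating the direction and the profile of the perturbation at the middle vertices: rotating clockwise would push their angles past $\pi$ and produce a non-convex polygon, while an affine height profile would leave the perturbed middle vertices collinear with $v_1$ and $v_{n-1}$ and keep all their interior angles equal to $\pi$. Once both the counterclockwise direction and a strictly concave height profile are fixed, the argument above converts the degenerate convex configuration of Lemma~\ref{ThmDistinct} into the strictly convex one required by the theorem.
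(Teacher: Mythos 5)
Your proposal is correct and takes essentially the same route as the paper: both start from the degenerate configuration of Lemma~\ref{ThmDistinct} and perturb the collinear vertices along their concentric circles so that every straight angle becomes strictly less than $\pi$, while the non-degenerate corners (and hence the origin, inside the fixed triangle $v_1v_{n-1}v_n$) survive by continuity. The only substantive difference is bookkeeping: the paper nudges $v_3,\ldots,v_{n-1}$ sequentially to the left of each preceding edge under a total angular budget $\theta=\gamma-\zeta$, whereas you prescribe all rotation angles simultaneously via a strictly concave height profile and verify strict convexity by a first-order expansion in $\epsilon$.
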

\begin{proof}
	Assume that the vertices of a convex configuration are located by the processes in Theorem \ref{ThmDistinct} as shown in Figure \ref{fig02-concircs}. The perturbation is done with the vertices $v_3, ..., v_{n-1}$ by the following processes.
	
	We firstly consider the angular distance between vertices $v_1$ and $v_{n-1}$. For the triangle $\triangle v_1v_{n-1}v_n$, the angle $\gamma$ between $\overrightarrow{v_nv_1}$ and $\overrightarrow{v_nv_{n-1}}$ is $\gamma:= \arccos\left(\dfrac{r_1^2 + r^2_{n-1}-d(v_1,v_{n-1})^2}{2r_1r_{n-1}}\right)$, and the angle $\zeta$ between $\overrightarrow{v_nM}$ and $\overrightarrow{v_nv_{n-1}}$ is $\zeta:=\arccos\left(\dfrac{d(v_n,M)^2 + r^2_{n-1}-d(M,v_{n-1})^2}{2d(v_n, M)r_{n-1}}\right)$. Remark that for the vertex $v_{n-1}$, it should not be moved in the region of the region of $v_1v_nM$ to make a polygon $P$ containing the origin $O$. Therefore, the angular movement of all vertices on its circle should be smaller than $\theta:=\gamma-\zeta$.
	
	For the pair of vertices $v_1, v_2$, we draw a ray $\overrightarrow{v_1v_2}$. Therefore, the vertex $v_3$ should be perturbed on the left-handed side of the ray $\overrightarrow{v_1v_2}$ on the circle $C_{r'_3}$ for a circular distance $\epsilon_1$ with angle $\theta>\epsilon_1/r'_3>0$ and move all vertices $v_4, ..., v_{n-1}$ along the ray $\overrightarrow{v_2v_3}$, says $v'_4, ..., v'_{n-1}$. Next, we fix a ray $\overrightarrow{v_2v_3}$ and perturb the vertex $v'_4$ to the left side of the ray $\overrightarrow{v_2v_3}$ for a circular distance $\epsilon_2$ on the circle $C_{r'_4}$ with angle $\theta - \epsilon_1/r'_3 >\epsilon_2/r'_4>0$, says $v''_4$, and move other points $v'_5, ..., v'_{n-1}$ on along the ray $\overrightarrow{v_3v_4}$. 
	
	We continue these processes until all of vertices $v_3, ..., v_{n-1}$ are perturbed such that $$\dfrac{\epsilon_1}{r'_3}+\dfrac{\epsilon_2}{r'_4}+..+\dfrac{\epsilon_{n-3}}{r'_{n-1}} < \theta.$$
	Hence, the resulting polygon is perturbed to be a strictly convex configuration, which concludes the proof of the existence.  
\end{proof}

Before we prove the following lemma, we would define the segment from the intersection between a line and all concentric circles. Let $\ell$ be a line, and $C_1$ be a circle with radius $r_1$ which is the largest circle among the concentric circles. $\overline{\ell}$ is the segment induced from the intersection between $\ell$ and $C_1$, whose the initial and end points are on the circle $C_1$.

With the similar strategy in Lemma \ref{LemStrictlyConv}, we can extend to the case that some radii are same, and the repeated number of the radii is at most 4

\begin{lemma}\label{LemmaRepeated4}
Let $\mathcal{R}=\{r_{(1,1)},..., r_{(1,m_1)},r_{(2,1)},...,r_{(2,m_2)}, ..., r_{(k,1)},..., r_{(k,m_k)}\}$ be a set of radii such that $r_{(i,1)}=...= r_{(i,m_i)}$ for each $i=1, ..., k$ and $1\leq m_i \leq 4$. Then there exist a convex configuration $V$ with respect to the radii set $\mathcal{R}$.
\end{lemma}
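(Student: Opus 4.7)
The plan is to extend the chord-based concentric-circle construction of Lemma \ref{ThmDistinct} and Theorem \ref{LemStrictlyConv} to the case of repeated radii. I first sort the distinct radius values in descending order $r'_1 > r'_2 > \cdots > r'_k$ with multiplicities $m_1, \ldots, m_k$, and draw the concentric circles $C_1, \ldots, C_k$. The segment $\overline{\ell}$ just introduced before the statement (a chord of $C_1$ whose interior crosses every smaller circle) will be the key device.

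The core idea is that any such chord $\ell$ meets each $C_i$ in exactly two points $p_i, q_i$, one on each side of the perpendicular $\ell^{\perp}$ at $O$, giving two candidate vertex positions per circle. To accommodate multiplicities up to four, I introduce a second chord $\ell'$ obtained by rotating $\ell$ about $O$ by some small angle $\phi > 0$, producing two more candidate points $\tilde p_i, \tilde q_i$ per circle. For each group with multiplicity $m_i \le 4$, I would select $m_i$ of the four candidates on $C_i$: $\{p_i\}$ if $m_i = 1$, $\{p_i, q_i\}$ if $m_i = 2$, $\{p_i, \tilde p_i, q_i\}$ if $m_i = 3$, and $\{p_i, \tilde p_i, q_i, \tilde q_i\}$ if $m_i = 4$.

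I would then order the $n$ selected vertices counterclockwise by angle around $O$ and check convexity by the cross-product test at each consecutive triple. Because both chords carry monotonically ordered intersection points as $i$ varies from $1$ to $k$, and because the tilt $\phi$ is small, the angular sequence of chosen vertices interleaves the two chords in a controlled way that mirrors the linear structure used in Lemma \ref{ThmDistinct}. Containment of $O$ in the resulting polygon is automatic, as the selection always places vertices on both sides of $\ell^{\perp}$ once $n \ge 4$.

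The main obstacle will be choosing the tilt angle $\phi$ small enough to guarantee a strict left turn at every transition between consecutive vertices that come from different chords or different circles. By direct analogy with the bound $\epsilon_1/r'_3 + \cdots + \epsilon_{n-3}/r'_{n-1} < \theta$ that controls the angular perturbations in Theorem \ref{LemStrictlyConv}, one only needs $\phi$ small relative to the minimum radii gap $\min_i(r'_i - r'_{i+1})$. Only finitely many cross-product inequalities must hold simultaneously, so a sufficiently small $\phi > 0$ can always be found, and an additional angular perturbation in the spirit of Theorem \ref{LemStrictlyConv} then yields the desired convex configuration.
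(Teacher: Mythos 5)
Your construction breaks at the key step: the second chord must not be a small rotation of $\ell$ about $O$. A chord that meets every concentric circle lies at some distance $d < r'_k$ from $O$; rotating it about $O$ by a small angle $\phi$ produces a second chord at the same distance $d$, and the two lines then meet at a point at distance $d/\cos(\phi/2)$ from $O$, which for small $\phi$ is still inside the smallest circle. Hence the two chord segments \emph{cross}, the four outermost candidates $p_1,\tilde p_1,q_1,\tilde q_1$ form a quadrilateral whose \emph{diagonals} are the two chords, and every other candidate ($p_i,q_i,\tilde p_i,\tilde q_i$ for $i\geq 2$) lies strictly inside that quadrilateral. Concretely, for the radii $2,2,2,2,1,1,1,1$ your selection places the four points of radius $1$ in the interior of the convex hull of the four points of radius $2$, so no cyclic ordering turns the eight points into a convex polygon, and no choice of small $\phi$ (nor any subsequent perturbation in the spirit of Theorem \ref{LemStrictlyConv}) repairs this. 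The cross-product inequalities you propose to verify are not merely tight for small $\phi$; they have the wrong sign. The claim that containment of $O$ is ``automatic'' because vertices lie on both sides of $\ell^{\perp}$ is also unjustified (what is needed is points on both sides of $\ell$ itself), but the crossing is the fatal issue.

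The repair is essentially what the paper does: take the two chords on \emph{opposite} sides of $O$ rather than nearly coincident --- in the paper's terms, $\overline{\ell_1}$ and $\overline{\ell_2}$ lie in opposite half-planes with respect to the $Y$-axis. Then the chords are disjoint, they become two opposite \emph{edges} of the convex hull rather than crossing diagonals, all intermediate intersection points lie on the boundary (a weakly convex configuration, which the definition permits since angles equal to $\pi$ are allowed), and $O$ sits between the two chords. For multiplicities $2$ or $3$ the paper additionally routes the second chord through a point $M_1$ on the segment $\overline{MO}$ so that the origin is caught inside the resulting triangle or quadrilateral; some such argument is needed in your write-up as well, since a single chord (your $m_i\leq 2$ selection) yields collinear points and cannot enclose $O$ on its own.
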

\begin{proof}
	Let $V$ be a set of vertices such that each vertex $v_{(i, j)}$ satisfying the radius $r_{(i, j)}$.
	We assume that the elements in $\mathcal{R}$ are sorted in such a way that
	$$r_{(1,1)}=...=r_{(1,m_1)}>r_{(2,1)}=...=r_{(2,m_2)}> ...> r_{(k,1)}=...= r_{(k,m_k)}.$$
	We already proved the case $m_i=1$ for all $i$ in Lemma \ref{ThmDistinct}. Similar to Theorem \ref{ThmDistinct}, the proof relies on the location of points on the concentric circles with radii $r_{(1,1)}, r_{(2,1)}, ..., r_{(k,1)}$. Hence, without loss of generality, assume that the center of circles are at the origin $O$ of $XY$-plane.
	
	We separate the proof into two cases as follows.
		\begin{figure}[ht]
			\begin{center}
				\includegraphics[scale=0.55]{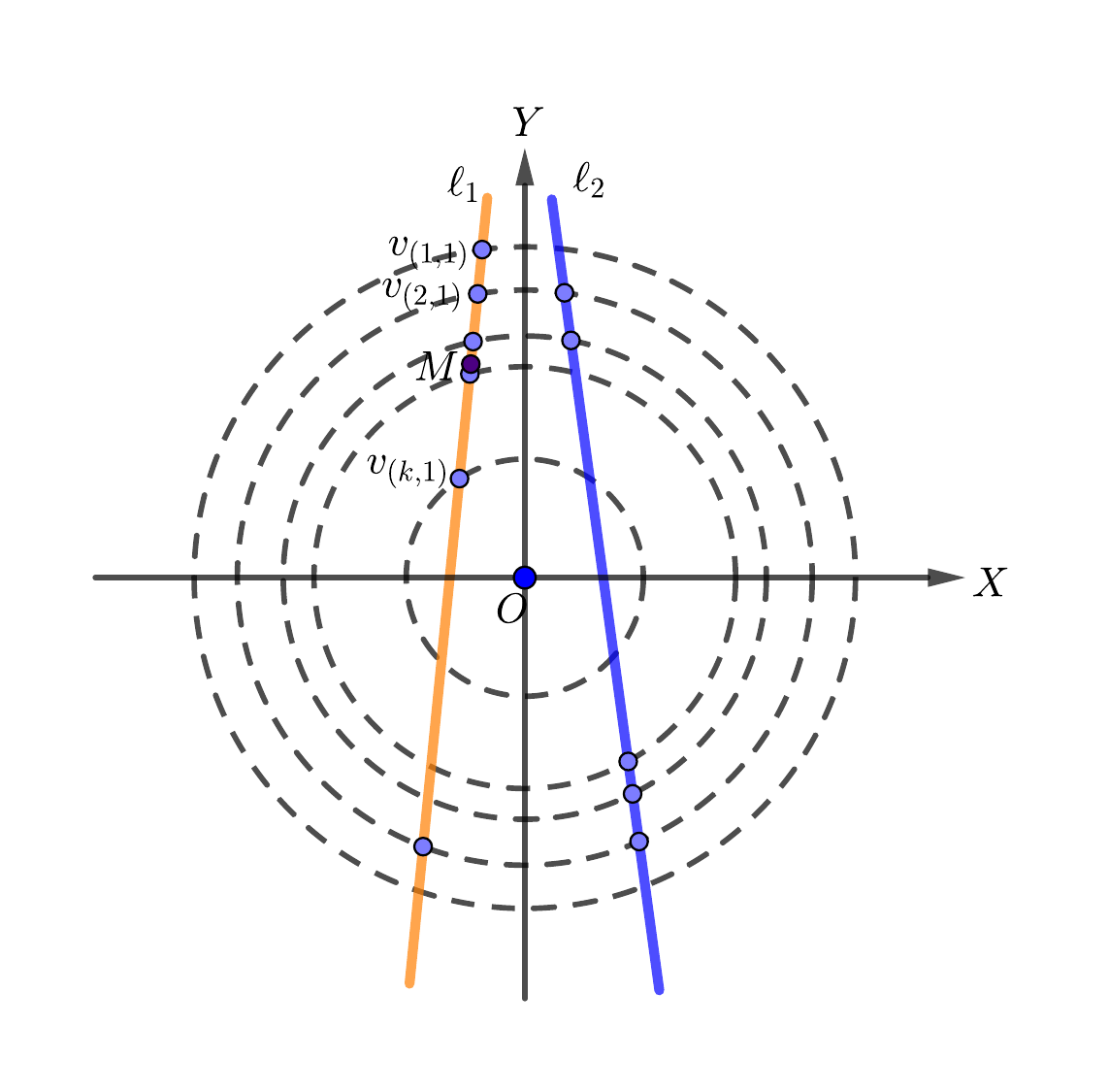} 				\includegraphics[scale=0.55]{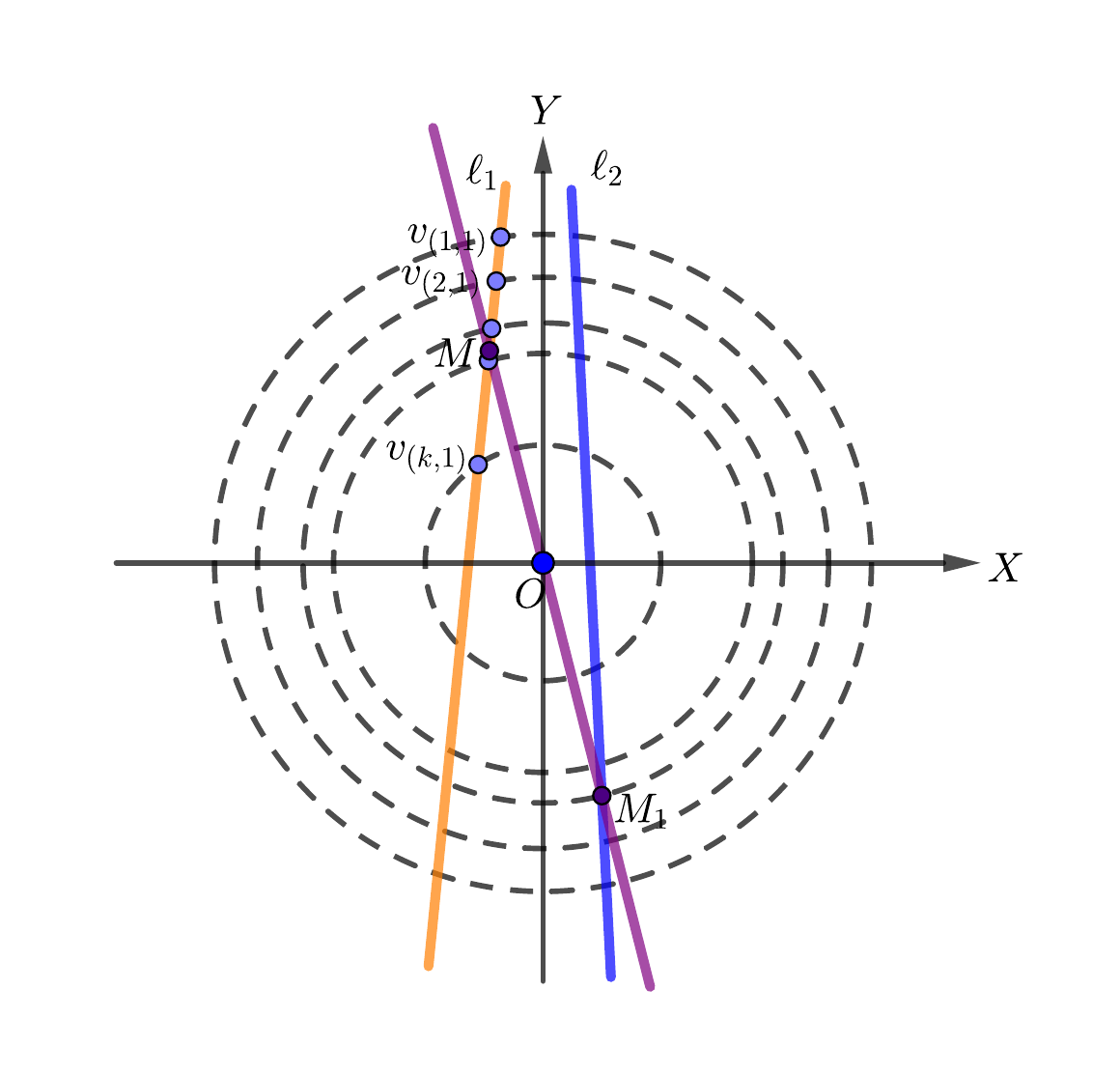}
			\end{center}
			\caption{The construction for the convex configuration of $V$ when (left) $\max\{m_1, ..., m_k\}=4$ and (right) $\max\{m_1, ..., m_k\}=2$ or $3$}\label{fig02-concircsAB}
		\end{figure}
		
	\textbf{Case 1} $\max\{m_1, ..., m_k\}=4$

	We construct a line $\ell_1$ and $\ell_2$ which intersect all concentric circles such that $\overline{\ell_1}$ and $\overline{\ell_2}$ are on the opposite half-plane with respect to the $Y$-axis as shown in Figure \ref{fig02-concircsAB} (left).
	
	Then we lay the points in the set $V$ satisfying each circle radius on the intersection between $\ell_1, \ell_2$, and the concentric circles. This forms a convex quadrilateral containing the origin, which is a convex configuration of $V$ with respect to the given radii.

	\textbf{Case 2} $\max\{m_1, ..., m_k\}=2$ or $3$
	
	Assume that $\max\{m_1, ..., m_k\}=m_p$ such that $p < j$ for all $j=p+1, ..., n$.
	
	We firstly draw a line $\ell_1$ such that $\overline{\ell_1}$ is on a side of a half-plane with respect to $Y$-axis. The first $k$ points $v_{(1, 1)}, v_{(2, 1)}, ..., v_{(k, 1)}$ are chosen from the intersection between $\ell_1$ and concentric circles in a same quadrant. After that, we find the midpoint $M$ between $v_{(1, 1)}$ and $v_{(k, 1)}$ on the line $\ell_1$ and draw a line $\overline{MO}$. The intersection between $\overline{MO}$ and the circle $C_p$ with the radius $r_{(p,1)}$ is denoted as $M_1$. Then, we draw a line $\ell_2$ passing through $M_1$, where $\overline{\ell_2}$ is laid in the opposite half-plane of $\overline{\ell_1}$ with respect to $Y$-axis, and $\ell_2$ intersects all concentric circles, which is shown in Figure \ref{fig02-concircsAB} (right).
	
	Hence, we place the remaining points on the intersections between $\ell_2$ and concentric circles. Since $M_1$ is the point on the largest circle whose contains the maximum number of points, at least $v_{(1, 1)}v_{(k, 1)}M_1$ forms a triangle, or a convex quadrilateral $v_{(1, 1)}v_{(k, 1)}M_1v_{(q_1, q_2)}$ for some $q_1, q_2$, where $v_{(q_1, q_2)}$ is a point on the line $\ell_2$. This forms a convex configuration of $V$ with respect to given radii $\mathcal{R}$.
	
	With these cases, the proof is concluded as desired.
	
\end{proof}

For the strictly convex configuration, we can employ a similar strategy to Theorem \ref{LemStrictlyConv} as presented in the following theorem.

\begin{theorem}\label{ThmStrictlyConv234}
For a given positive radii set $\mathcal{R}=\{r_{(1,1)},...,$ $r_{(1,m_1)},r_{(2,1)},...,$ $r_{(2,m_2)}, ..., r_{(k,1)},..., r_{(k,m_k)}\}$ such that $r_{(i,1)}=...= r_{(i,m_i)}$ for each $i=1, ..., k$ and $1\leq m_i \leq 4$. Then there exist a strictly convex configuration $V$ with respect to the radii set $\mathcal{R}$.
\end{theorem}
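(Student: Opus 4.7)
My plan is to begin with the (weakly) convex configuration produced by Lemma \ref{LemmaRepeated4} and then perturb the degenerate vertices along their concentric circles, exactly in the spirit of Theorem \ref{LemStrictlyConv}, until every interior angle is strictly less than $\pi$. The source of the non-strict convexity is transparent: in both Case 1 and Case 2 of Lemma \ref{LemmaRepeated4} the vertices are placed on at most two chords $\overline{\ell_1}$ and $\overline{\ell_2}$ cutting through the concentric circles, and so appear in collinear runs. Every interior vertex on such a run has angle exactly $\pi$, and these are precisely the angles that must be strictly reduced while preserving convexity and the property that $O$ remains in the polygon interior.

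The key steps are as follows. First, fix the configuration from Lemma \ref{LemmaRepeated4} and identify the maximal subchains of consecutive collinear vertices (those on $\overline{\ell_1}$, and, where applicable, those on $\overline{\ell_2}$). Second, for each such subchain compute an angular tolerance $\theta>0$ in the same manner as in the proof of Theorem \ref{LemStrictlyConv}: $\theta$ is the angular gap, measured from the "outer" vertex $v_n$ (or from $M_1$ in Case 2), between the ray through the midpoint $M$ of the chord and the ray through the extreme chord vertex that bounds the triangle currently enclosing $O$. Third, perturb the interior vertices of the subchain inductively along their respective circles by angular amounts $\epsilon_j/r_j$ with $\sum_j \epsilon_j/r_j<\theta$, at each step sliding the yet-untouched vertices along the freshly created edge so that the remaining angular budget stays strictly positive. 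Because $\overline{\ell_1}$ and $\overline{\ell_2}$ lie in opposite half-planes about the $Y$-axis, the at most four vertices sharing a common radius are already placed at distinct angular positions in the pre-perturbation configuration; the perturbation only refines this separation, so no two vertices of $V$ collide during the process and each radius is still realised by exactly the prescribed number of vertices.

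The main obstacle is verifying that the tolerance $\theta$ produced in the second step is strictly positive in every subcase, in particular in Case 2, where $O$ is shielded from the long collinear subchain on $\overline{\ell_1}$ by the single triangle $v_{(1,1)}v_{(k,1)}M_1$ (or the quadrilateral $v_{(1,1)}v_{(k,1)}M_1v_{(q_1,q_2)}$) and the innermost circle $C_{r_{(k,1)}}$ may be very small compared to $C_{r_{(1,1)}}$. One has to check that the angle $\gamma-\zeta$ appearing in Theorem \ref{LemStrictlyConv}, now measured at $v_n$ or $M_1$, remains bounded away from zero so that there is genuine room to distribute the perturbations $\epsilon_j/r_j$ among all inner vertices of the chord. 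Once this positivity of $\theta$ is established, the inductive perturbation scheme of Theorem \ref{LemStrictlyConv} applies verbatim on each subchain independently, and concatenating the perturbations over both chords yields a polygon in which every interior angle is strictly smaller than $\pi$, producing a strictly convex configuration realising the prescribed radius multiset $\mathcal{R}$.
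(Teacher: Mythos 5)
Your proposal is correct and follows essentially the same route as the paper: start from the degenerate configuration of Lemma \ref{LemmaRepeated4} and eliminate the angle-$\pi$ vertices by small angular perturbations along the concentric circles, reusing the cumulative budget $\epsilon_1/r'_3+\dots<\theta$ from Theorem \ref{LemStrictlyConv}. The paper organizes the perturbation quadrant by quadrant (first sliding the extreme vertex of each collinear run toward a coordinate axis to create room) rather than chord by chord, but the underlying mechanism and the level of rigor --- including the unverified positivity of the angular tolerance, which you rightly flag --- are the same.
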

\begin{proof}
Suppose that the convex configuration is settled by Lemma \ref{LemmaRepeated4}. A proof relies on on each case as presented in Lemma \ref{LemmaRepeated4}. 

	\textbf{Case 1} $\max\{m_1, ..., m_k\}=4$
	
In this case, the construction in Lemma \ref{LemmaRepeated4} yields a convex quadrilateral. Without loss of generality, assume that all points are separated into 4 quadrants, and suppose to start from points in the second quadrant.

	We firstly move the  $v_{(k, 1)}$ to the position which is close to the negative side of X-axis, i.e. the angle between $\overrightarrow{Ov_{(k, 1)}}$ and the negative side of X-axis is $\theta_1$. Then we draw the line $\ell_{k, 1}$ passing through $v_{(k, 1)}$ and perpendicular to X-axis. Suppose that $H^0(\ell_{k, 1})$ is the half-plane including the origin. We perturb all vertices $v_{(2, 1)}, v_{(3, 1)}, ..., v_{(k-1, 1)}$ by the technique similar to Theorem \ref{LemStrictlyConv} in such a way that all vertices are in the region $H^0(\ell_{k, 1})\backslash \ell_{(k, 1)}$. Therefore, there exists a line $\ell_{1}$ passing through $v_{(k-1, 1)}$ and $v_{(k, 1)}$ which is different to $\ell_{k, 1}$.
	
	Let $V_3=\{v_{(k_j, 2)}:\text{ for some } k_j = 1, ..., k\}$ be a set of points in the third quadrant. Choose the point $v^m_{(k_3, 2)}\in V_3$ such that $r_{(k_3, 2)}=\min\{r_{(k_j, 2)}:\text{ for some } k_j = 1, ..., k\}$ and $v^M_{(k_3, 2)}\in V_3$ such that $r_{(k_3, 2)}=\max\{r_{(k_j, 2)}:\text{ for some } k_j = 1, ..., k\}$. We firstly move $v^m_{(k_3, 2)}$ to the line $\ell_{k, 1}$ and then move $v^M_{(k_3, 2)}$ to the position which is close  to the negative side of Y-axis, i.w. the angle between $\overrightarrow{Ov^M_{(k_3, 1)}}$ and the negative side of y-axis is $\theta_2$. After that, we construct a line $\ell^M_{k_3, 2}$ passing through $v^M_{(k_3, 2)}$ and perpendicular to Y-axis. Then we perturb all points in $V_3$ except $v^m_{(k_3, 2)}$ and $v^M_{(k_3, 2)}$ using the same technique in Theorem \ref{LemStrictlyConv} such that all vertices are laid in the region $(H^0(\ell_{k, 1})\backslash \ell_{(k, 1)})\cap(H^0(\ell^M_{k_3, 2})\backslash\ell^M_{k_3, 2})\cap Q_3$, where $H^0(\ell^M_{k_3, 2})$ is a half-plane of the line $\ell^M_{k_3, 2}$ including the origin and $Q_3$ is the region of the third quadrant.
	
			\begin{figure}[ht]
				\begin{center}
					\includegraphics[scale=0.9]{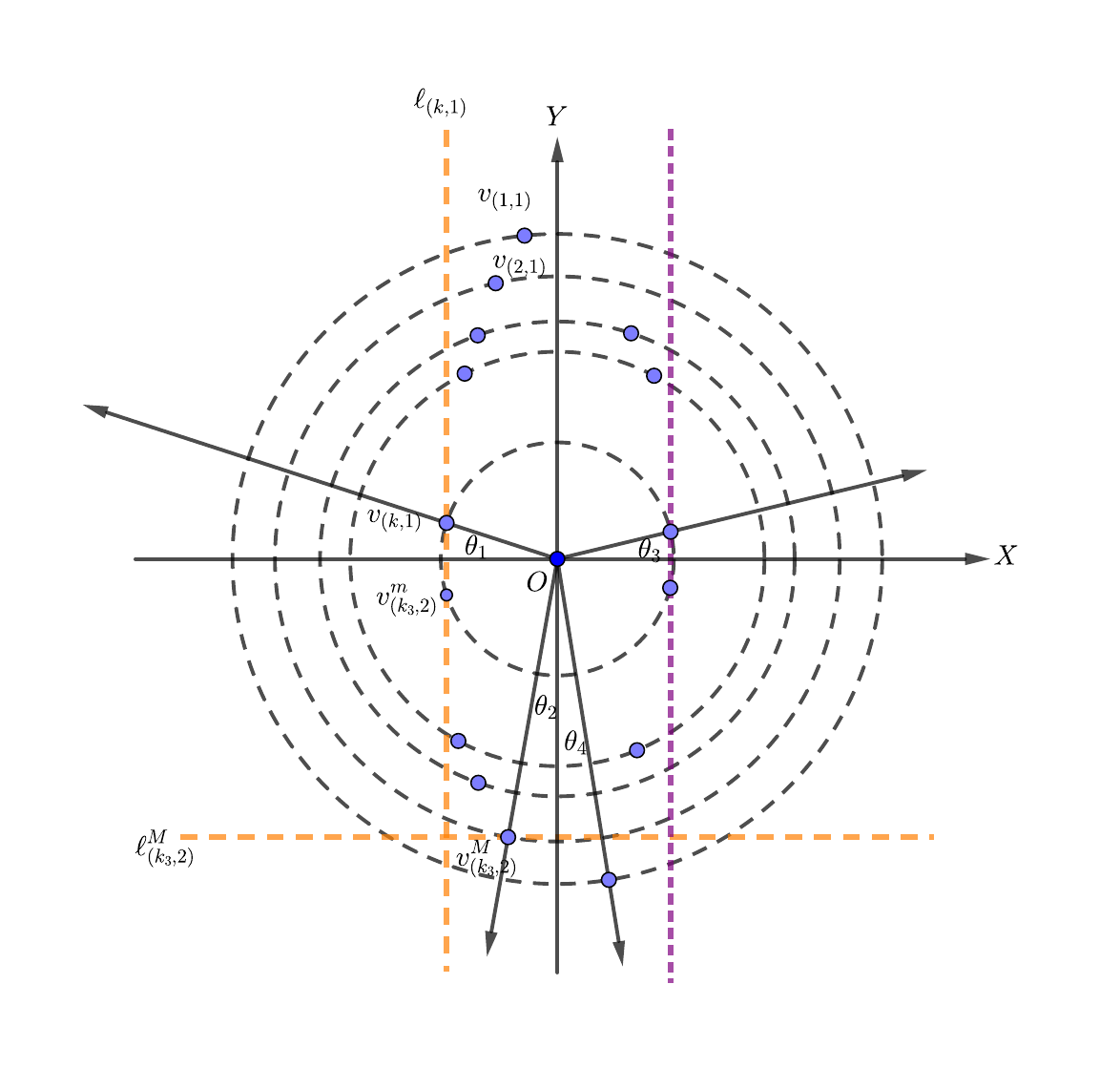} 	
				\end{center}
				\caption{The perturbation of the points to find the strictly convex configuration when $\max\{m_1, ..., m_k\}=4$}\label{fig04-concircsA}
			\end{figure}

	Using a similar technique, we can perturb all vertices in the first quadrant and the fourth quadrant. Finally, the convex polygon can be closed by the point with the largest radius in the third quadrant and fourth quadrant, and the point with the largest radius in the first quadrant and the second quadrant. That is, there is a strictly convex configuration from the given set of radii. 
	
	\textbf{Case 2} $\max\{m_1, ..., m_k\}=2$ or $3$
	
	We can employ the same strategy of the first case to the points in the second, fourth and the first quadrant to obtain the strictly convex configuration of the given set of radii.
	
	Therefore, we can find the strictly convex configuration from the given set of radii in any cases.
\end{proof}

\section{Existence of a Convex Polyhedron in Three-Dimensional Space}

Given a set of radii $\mathcal{R}=\{r_1, ..., r_n\}$, assume that the radii are the Euclidean distance from the origin to the vertex $v_1, ..., v_n$ in three-dimensional space. Recall that the convex configuration, in this case, is the convex polyhedron including the origin $O$.

In the three-dimensional case, the existence of the convex configuration can be proved. Firstly, we consider the simple case when all of the radii are distinct.

\begin{lemma}
	For $n\geq 4$, given a set of positive radii $\mathcal{R}=\{r_1, ..., r_n\}$ such that all of radii are distinct. There exists a convex configuration of $\mathcal{V}=\{v_1, ..., v_n\}$ in three-dimensional space.
\end{lemma}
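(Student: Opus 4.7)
My plan is to reduce to the planar construction of Lemma \ref{ThmDistinct} and Theorem \ref{LemStrictlyConv} by lifting a two-dimensional convex configuration into space and crowning it with a single ``apex'' point on the other side of the origin.

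After reordering so that $r_1 > r_2 > \cdots > r_n$, I would pick a plane $\pi \subset \mathbb{R}^3$ at distance $\epsilon$ from the origin $O$ with $0 < \epsilon < r_n$, and let $O'$ be the foot of the perpendicular from $O$ to $\pi$. The condition $v_i \in \pi$ with $|Ov_i| = r_i$ translates to placing a planar point at distance $\sqrt{r_i^2 - \epsilon^2}$ from $O'$ inside $\pi$. For $i = 1, \ldots, n-1$ these target distances are positive (since $r_{n-1} > r_n > \epsilon$) and pairwise distinct. Theorem \ref{LemStrictlyConv}, applied inside $\pi$ with $O'$ in the role of the origin, therefore furnishes $v_1, \ldots, v_{n-1}$ forming a strictly convex polygon $P \subset \pi$ whose interior contains $O'$.

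I would then set $v_n$ on the line through $O$ and $O'$, on the opposite side of $\pi$ from $O$, at distance $r_n - \epsilon$ from $\pi$; this gives $|Ov_n| = \epsilon + (r_n - \epsilon) = r_n$ as required. The convex hull of $\{v_1, \ldots, v_n\}$ is then a pyramid with base $P$ and apex $v_n$. Since $v_n \notin \pi$ while $P$ is strictly convex inside $\pi$, the $n$ prescribed points are exactly the vertices of this hull. Moreover, because $v_n$, $O$, and $O'$ are collinear with $O$ lying strictly between the apex $v_n$ and the interior point $O'$ of the base, $O$ sits in the interior of the pyramid, and the construction delivers a convex configuration with respect to $\mathcal{R}$.

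I do not foresee any serious obstacle; the argument is essentially a one-dimensional lift of the planar result already in hand. The only point that needs a brief check is that all $n$ prescribed points really are extreme in the hull, and this is immediate: since $v_n$ is the only point off $\pi$, any convex combination of the others representing a base vertex must assign $v_n$ weight zero, which reduces the claim to the strict convexity of $P$ inside $\pi$.
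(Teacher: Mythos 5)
Your overall strategy---lift the planar construction of Lemma \ref{ThmDistinct}/Theorem \ref{LemStrictlyConv} into an auxiliary plane and close the polyhedron with a single apex---is workable, but the placement of $v_n$ as written breaks the one requirement that is specific to this problem, namely that the configuration contain the origin. You put $v_n$ on the \emph{opposite} side of $\pi$ from $O$, at distance $r_n-\epsilon$ from $\pi$; this is forced by your arithmetic $|Ov_n|=\epsilon+(r_n-\epsilon)$, since the two distances only add when $O$ and $v_n$ are separated by $\pi$. With that choice the entire pyramid $\mathrm{conv}(P\cup\{v_n\})$ lies in the closed half-space bounded by $\pi$ that contains $v_n$, while $O$ lies strictly in the open half-space on the other side, so $O$ is outside the hull. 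Your closing justification---that ``$O$ lies strictly between the apex $v_n$ and the interior point $O'$ of the base''---is precisely the property you need, but it is false for the stated placement: on the common line the order is $v_n$, then $O'$, then $O$. The repair is immediate: put $v_n$ on the \emph{same} side of $\pi$ as $O$, at distance $r_n+\epsilon$ from $\pi$, so that again $|Ov_n|=(r_n+\epsilon)-\epsilon=r_n$; then the order on the axis is $v_n$, $O$, $O'$, the open segment from the apex to the relative-interior point $O'$ of the base passes through $O$, and $O$ is interior to the pyramid. Everything else in your argument (positivity and distinctness of the planar distances $\sqrt{r_i^2-\epsilon^2}$, and the extremality of all $n$ points) goes through.

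For what it is worth, the paper takes a slightly different lift: it places the two largest radii at the north pole of $S_1$ and the south pole of $S_2$ respectively, puts the remaining $n-2$ points in the $XY$-plane through $O$ using the planar result, and takes the resulting bipyramid; the equatorial polygon already contains $O$, so no offset plane or radius adjustment is needed, but two points must be sacrificed to the poles and the case $n=4$ is treated separately. Your (corrected) single-apex version spends only one point on the apex and handles $n=4$ uniformly, at the cost of the $\epsilon$-offset and the $\sqrt{r_i^2-\epsilon^2}$ bookkeeping. Both are legitimate reductions to the two-dimensional case.
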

\begin{proof}
	When $n=4$, we place the point $v_1, v_2, v_3, v_4$ with respect to $r_1, r_2, r_3, r_4$ as vertex of the tetrahedron. Therefore, the convex configuration is obviously obtained.
	
	Suppose that $n\geq 5$.    Assume that the descending order of $\mathcal{R}=\{r_1, ..., r_n\}$ is $\mathcal{R}'=\{r'_1, ..., r'_n\}$, where $r'_i = r_j$ for some $i, j$. Construct concentric spheres $S_1(O, r'_1), S_2(O, r'_2)$ at the origin $O$ with radius $r'_1$ and $r'_2$. Without loss of generality, we place the vertex $v_1$ and $v_2$ on the north pole of sphere $S_1$ and south pole of sphere $S_2$, respectively.
	
	We consider the $XY$-plane and place vertices $v_3, ..., v_n$ onto the $XY$-plane by the processes in Theorem \ref{ThmDistinct} and Lemma \ref{LemStrictlyConv} to obtain a convex polygon $P$ of $\{v_3, ..., v_n\}$. Then we join the edge $v_1$ from the north pole to the vertex set $\{v_3, ..., v_n\}$, and $v_2$ from the south pole to the same set. The obtained polyhedron is a polyhedron whose faces are triangles. Since the polygon $P$ is convex and contain the origin $O$, the constructed polyhedron is convex as desired.\smartqed
	
\end{proof}

In general, the radii set $\mathcal{R}$ is not necessarily distinct. Assume that the set of radii consists of $n$ elements with distinct $k$ elements. Let $\mathcal{R}=\{r_{(1,1)},..., r_{(1,m_1)}$ $,r_{(2,1)},...,r_{(2,m_2)}, ..., r_{(k,1)},..., r_{(k,m_k)}\}$ be a set of radii such that  that
$$r_{(1,1)}=...= r_{(1,m_1)}>r_{(2,1)}=...=r_{(2,m_2)}> ...> r_{(k,1)}=...= r_{(k,m_k)}$$
and $m_1 + m_2 + ... + m_k = n$. 
That is, for the $i$-th layer, the radius of the $i$-th layer is $r_{(i, 1)}$, and the $i$-th layer consists of $m_i$ points.

The following theorem shows the existence of a convex configuration in the three-dimensional case.

\begin{theorem}
	Let $\mathcal{R}$ be a set of radii consisting of $n$ elements with $m_i$ repeated radii for each $i$ distinct radius such that the radii are arranged as
$$r_{(1,1)}=...= r_{(1,m_1)}>r_{(2,1)}=...=r_{(2,m_2)}> ...> r_{(k,1)}=...= r_{(k,m_k)}$$
	and $m_1 + m_2 + ... + m_k = n$. Then there exists a convex configuration of $\mathcal{V}=\{v_1, ..., v_n\}$ induced by the set $\mathcal{R}$.\label{Thm3DCase}
\end{theorem}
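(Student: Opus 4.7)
The plan is to bootstrap from the preceding distinct-radii 3D lemma: first place $k$ representative points, one per distinct radius, forming a convex polyhedron $P_0$ with the origin strictly inside, then ``split'' each representative vertex into a small cluster of $m_i$ points on the same sphere. The three-dimensional setting provides enough angular room on every sphere to accommodate arbitrary multiplicities, which is exactly why the planar restriction $m_i\le 4$ from Section~3 is not needed in $\mathbb{R}^3$.

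Concretely, I would apply the preceding distinct-radii lemma to the $k$ distinct radii $r_{(1,1)},\ldots, r_{(k,1)}$ to obtain points $w_1,\ldots,w_k$ with $|w_i|=r_{(i,1)}$ such that $P_0:=\mathrm{conv}\{w_1,\ldots,w_k\}$ is a convex polyhedron having the origin strictly in its interior. For each $i$, fix the unit direction $\hat{u}_i := w_i/|w_i|$; this direction strictly supports $P_0$ at $w_i$, with a dominance gap $\delta_i := \min_{i'\neq i}(w_i - w_{i'})\cdot \hat{u}_i > 0$. Then, for a small angular parameter $\varepsilon>0$, I would replace each $w_i$ by $m_i$ points $v_{(i,1)},\ldots,v_{(i,m_i)}$ taken as the vertices of a regular $m_i$-gon inscribed in the spherical circle of angular radius $\varepsilon$ on $S_i$ centered at $w_i$ (when $m_i=1$, keep $w_i$ itself). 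The edge cases $k<4$, where the preceding lemma does not directly apply, can be covered by a small ad-hoc variant of the construction that uses the multiplicities $m_i$ to secure at least four affinely independent base points.

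With the resulting polyhedron $Q_\varepsilon := \mathrm{conv}\{v_{(i,j)}\}$, I would verify two properties: (a) the origin still lies strictly inside $Q_\varepsilon$, and (b) each $v_{(i,j)}$ is a vertex of $Q_\varepsilon$. Property (a) is immediate from Hausdorff continuity of convex hulls---$Q_\varepsilon\to P_0$ as $\varepsilon\to 0$, and the origin lies in the open interior of $P_0$. For (b), I exhibit a supporting direction $\hat{d}_{(i,j)} := \hat{u}_i + \eta\,\hat{t}_{(i,j)}$, where $\hat{t}_{(i,j)}$ is the outward-pointing tangent direction on $S_i$ that singles out $v_{(i,j)}$ within its own regular $m_i$-gon, and $\eta>0$ is chosen small enough that the dominance gap $\delta_i$ still separates the $i$th cluster from every other cluster. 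The main obstacle, I expect, is choosing $\varepsilon$ (and the corresponding $\eta$) uniformly small for all $k$ clusters at once; this reduces to bounding $\varepsilon$ below the minimum over $i$ of three explicit positive quantities---the dominance gaps $\delta_i$, the distance from the origin to the nearest face of $P_0$, and the angular threshold below which the $m_i$-gon on the spherical cap behaves as a genuine convex polygon relative to the global hull. Each of these thresholds is strictly positive, so a common $\varepsilon$ exists and the construction produces the required convex configuration.
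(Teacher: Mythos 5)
Your overall strategy (solve the distinct-radii case with $k$ representative points, then split each representative into a small cluster on its sphere) is a genuinely different route from the paper, which instead places all $n$ points directly on the lateral surface of a circular cone intersecting the concentric spheres, using the latitude circles $\tilde{c}_j$ and a longitude grid, so that every point automatically admits a supporting plane tangent to the cone. However, your version has a genuine gap at the key step: the claim that the radial direction $\hat{u}_i = w_i/|w_i|$ strictly supports $P_0$ at $w_i$, i.e.\ that $\delta_i = \min_{i'\neq i}(w_i - w_{i'})\cdot \hat{u}_i > 0$. Being a vertex of a convex polytope containing the origin in its interior does \emph{not} imply that the radial direction is a supporting direction; one only gets $w_{i'}\cdot \hat{u}_i \le |w_{i'}|$, which exceeds $|w_i|$ whenever $r_{(i',1)} > r_{(i,1)}$ and the angular separation between $w_{i'}$ and $w_i$ is small. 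Concretely, in the construction of the preceding distinct-radii lemma the intermediate points sit on a common line $\ell$ at distance $h$ from the origin, so writing $w_i=(x_i,h)$ with $x_i=\sqrt{r_i^2-h^2}$ one gets $w_{i'}\cdot\hat{u}_i = (x_{i'}x_i+h^2)/r_i > r_i$ for every $i'$ with larger radius; the ``dominance gap'' is negative exactly where you need it. Since your argument for property (b) — that each cluster point $v_{(i,j)}$ survives as a vertex of $Q_\varepsilon$ under the direction $\hat{u}_i + \eta\,\hat{t}_{(i,j)}$ — rests entirely on $\delta_i>0$ (the tangential perturbation only discriminates within a cluster if the cluster lies in a plane orthogonal to the base direction, which forces that base direction to be radial), the proof does not go through as written.

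The gap is repairable, but the repair is the real content: you must construct $P_0$ so that every $w_i$ is \emph{radially} extreme, e.g.\ by enforcing $\cos\angle(w_i,w_{i'}) < r_{(i,1)}/r_{(i',1)}$ for every pair with $r_{(i',1)}>r_{(i,1)}$, or by placing the $w_i$ on a common convex surface and replacing $\hat{u}_i$ with the outward normal of a supporting plane of that surface. The latter is essentially what the paper does with its cone, which is why its construction never needs the radial-support claim. You should also supply, rather than defer, the $k<4$ base case and verify that the origin ends up in the interior there; as stated that part of the argument is only a promise.
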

\begin{proof}
	We firstly construct $k$ concentric spheres $S_1, S_2, ..., S_k$ where the center is at $O$ with radii $r_{(1,1)}, r_{(2,1)},$ $... , r_{(k,1)}$. 
	
	Let $S_0$ be a sphere whose the radius is $r_{(1, 1)}+\epsilon$ for any $\epsilon>0$. Therefore, there exists a circular cone $\mathcal{C}$ whose  apex $A$ is at the north pole of the sphere $S_0$, and the lateral of the cone intersect all of concentric spheres, i.e. the apex angle $\theta$ satisfies $\theta< 2\arctan\left(\dfrac{r_{(1, 1)}+\epsilon}{r_{(k, 1)}}\right)$.
	
	Hence, the cone $\mathcal{C}$ intersects the concentric spheres $S_1, S_2,..., S_k$ such that the intersection between $\mathcal{C}$ and a sphere $S_j$ for all $j=1, 2, ..., k-1$ is a spherical circle, says $\tilde{c}_{j}$ where the centers are north pole of each sphere. Remark that the distance from $O$ to a point on the circle $\tilde{c}_j$ is $r_{(j, 1)}$. Therefore, there are $k$ circles from the largest sphere $S_1$ to the smallest sphere $ S_k$ on the upper hemisphere, as shown as the cross section in Figure \ref{fig03-conspheres}.

		\begin{figure}[ht]
		\begin{center}
			\includegraphics[scale=1]{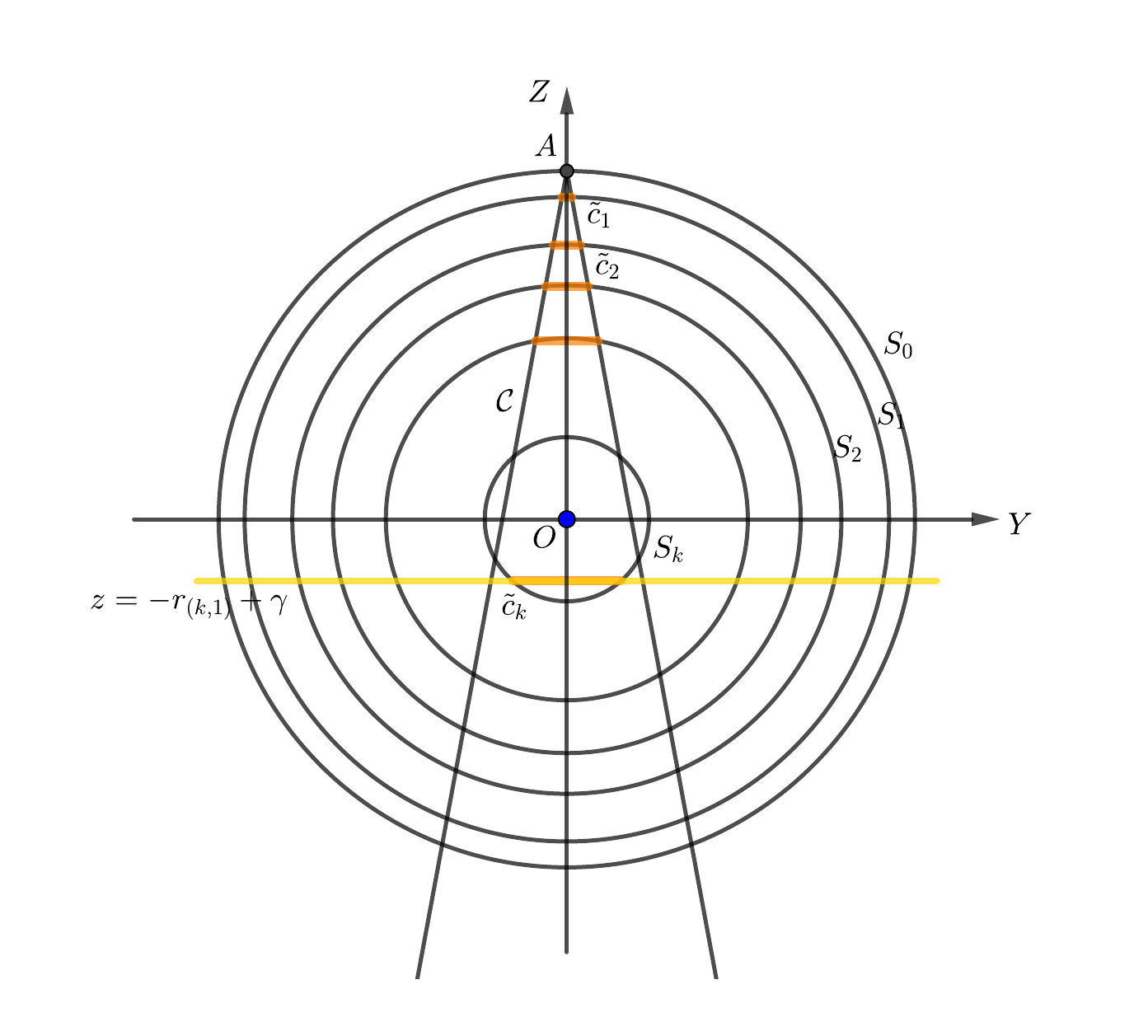}
		\end{center}
		\caption{The cross section at $YZ$-plane for the concentric spheres including a spherical circle of each layer and the cone $\mathcal{C}$}\label{fig03-conspheres}
	\end{figure}

	We choose a line $\ell$ emanating from the apex $A$ on the surface of $\mathcal{C}$. For each layer of circle $\tilde{c}_j$ over the upper hemisphere except the smallest layer $\tilde{c}_k$, place a point on the line $\ell$. Therefore, each layer has at least one point on its layer.
	
	For the number of points of $k$ layers $m_1, m_2, ..., m_k$, assume that the $p$-th layer contains the maximum number of points, i.e. $m_p:=\max\{m_1, m_2, ..., m_k\}$. Therefore, we firstly distribute $m_p$ points on the spherical circle $\tilde{c}_p$ at the $p$-th layer in such a way that the angular distance $\beta$ between each vertex on the spherical circle $\tilde{c}_p$ are equal. Note that we fix the point which is already placed on the line $\ell$ and distribute other $m_p-1$ points, says $v_{p, 1}, v_{p, 2}, ..., v_{p, m_p}$. 
	
	For each $v_{p, i}$ on the $p$-level, construct a plane $P_{v_{p, i}}$ passing through $v_{p, i}$ and $Z$-axis to create a spherical grid. Remark that $P_{v_{p, i}}$ intersects all concentric spheres and generate longitude lines $L_{v_{p, i},\tilde{c}_p}$ on each sphere $S_i$ which are great circles.
	
	Therefore, in each level $j=1, ..., p-1, p+1, ..., k$, the latitude is considered as the spherical circle $\tilde{c}_j$ which intersects longitude $L_{v_{p, j},\tilde{c}_p}$ to $m_p$ points. We can place $m_j$ points on those intersections arbitrarily since $m_j\leq m_p$.
	Since all vertices are laid on the convex surface, for each placed points on the intersections, there exists a plane tangent to the cone passing through that point, and all points are in the same side of the plane. Hence, there exist faces joining $v_{p_1, i_1}, v_{p_2, i_2}, v_{p_3, i_3}$ for some $p_1, i_1, p_2, i_2, p_3, i_3$ which form faces of convex polyhedra.
	
	With the exceptional case for the last smallest layer, say the $k$-th level, we construct a plane $z=-r_{(k, 1)}+\gamma$. Therefore, the parameter $\gamma$ can be considered in the following case.
	
	If the $k$-th layer contains exactly one point, choose $\gamma = 0$. This means that the plane $z=-r_{m_1+...+m_{k-1}+1}$ is a tangent plane at $(0, 0, -r_{(k, 1)})$. Therefore, the polyhedron can be bounded by joining all of the vertices to that point.
	
	Otherwise, assume that there are $m_k$ points at the $k$-th layer. We choose a small $\gamma > 0$ such that $\gamma < |r_{(k, 1)}|$. Therefore, there exists a spherical circle in the $k$-th layer. Then, we distribute $m_k$ points with the same angle and connecting the points in $k$-th level to the above levels to construct a convex polyhedron.
	
	Therefore, the convex configuration exists by the construction process, which concludes the proof.

\end{proof}

\section{Applications}
The main application of the existence of the convex configuration in the three-dimensional case is the confirmation about the non-emptiness properties of the spherical Laguerre Voronoi diagram, which the details will be described soon.

We first recall the definitions and constructions of the spherical Laguerre Voronoi diagram as presented in \cite{Sugihara2002}.

On the unit sphere $S^2$, Let $P=\{p_1, ..., p_n\}$ be a set of points with the weight wet $W=\{w_1, ..., w_n\}$ and $\mathcal{G}=\{\tilde{c}_1, ..., \tilde{c}_n\}$ be a set of spherical circles whose each center is a point in $P$ corresponding to a weight in $W$. The spherical Laguerre Voronoi diagram $\mathcal{L}=\{L_1, ..., L_n\}$ is a Voronoi diagram generated from the set of spherical circles $\mathcal{G}$ with the Laguerre proximity $\tilde{d}_L(c_i, p)=\dfrac{\cos(\tilde{d}(p, p_i))}{\cos(w_i)}$, for a point $p\in S^2$.

The algorithms for constructing the spherical Laguerre Voronoi diagram presented in \cite{Sugihara2002} were based on the intersection of half-spaces of planes passing through the spherical circles including the origin. The dual structure of the spherical Laguerre Voronoi diagram is the spherical Laguerre Delaunay diagram. 

The spherical Laguerre Delaunay diagram can be constructed by the following procedures. For a set of generating circles $\mathcal{G}$, suppose that $P_i$ be a plane passing through the spherical circle $\tilde{c}_i$. Therefore, the dual point of the plane $P_i$ can be considered as $P_i^*=\dfrac{1}{\cos w_i}p_i$, and the spherical Laguerre Delaunay diagram can be constructed from the central projection of the convex hull $\mathcal{G}^*=\{P_1^*, ..., P_n^*\}$ onto the unit sphere $S^2$.

For a spherical Laguerre Voronoi diagram $\mathcal{L}$ generated by $\mathcal{G}$, the spherical Laguerre Voronoi cell $L_i$ is said to be \textit{empty} if $L_i=\emptyset$. The spherical Laguerre Voronoi diagram $\mathcal{L}$ satisfies the \textit{non-emptiness property} if for all $i$, $L_i\neq \emptyset$. Remark that a cell $L_i$ of the spherical Laguerre Voronoi diagram is empty if the dual point $P^*_i$ of the circle $\tilde{c}_i$ is inside of the convex hull of the set $\mathcal{G}^*$.

Instead of giving the spherical circles, assume that the radii of the spherical circles are given. The interesting question is to consider whether or not we can find the location of generators on the sphere in such a way that the generated spherical Laguerre Voronoi diagram satisfies the non-emptiness property.

The answer to the mentioned question is positive as follows.

\begin{theorem}
	Let $W=\{w_1, ..., w_n\}$ be a set of spherical circle radii. Then there exists a spherical Laguerre Voronoi diagram satisfying the non-emptiness property.
\end{theorem}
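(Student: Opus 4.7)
The plan is to translate the non-emptiness question into an instance of the three-dimensional convex configuration problem, and then invoke Theorem \ref{Thm3DCase} directly. The bridge is the dual-point characterization already recalled in this section: a cell $L_i$ is empty if and only if the dual point $P_i^{*}=\frac{1}{\cos w_i}p_i$ lies in the interior (or at least strictly inside the hull, not as a vertex) of the convex hull $\mathrm{conv}(\mathcal{G}^{*})$. Consequently, to guarantee non-emptiness it suffices to choose the generator positions $p_1,\dots,p_n$ on $S^{2}$ so that all of the dual points $P_1^{*},\dots,P_n^{*}$ are vertices of their common convex hull and the origin lies inside this hull (so the central projection onto $S^{2}$ covers $S^{2}$ and yields a well-defined Delaunay diagram on the whole sphere).

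Next I would introduce the auxiliary radii
\[
r_i \;=\; \frac{1}{\cos w_i}, \qquad i=1,\dots,n,
\]
which are well defined and positive because each $w_i\in[0,\pi/2)$. These are precisely the Euclidean distances from $O$ to the desired dual points $P_i^{*}$. I would then apply Theorem \ref{Thm3DCase} to the radii set $\mathcal{R}=\{r_1,\dots,r_n\}$: it furnishes points $v_1,\dots,v_n\in\mathbb{R}^3$ with $\|v_i\|=r_i$ that form a convex polyhedron containing the origin, every $v_i$ being a vertex of this polyhedron.

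From this convex configuration I would define
\[
P_i^{*} \;:=\; v_i, \qquad p_i \;:=\; \cos(w_i)\,v_i,
\]
so that $\|p_i\|=\cos(w_i)\,r_i=1$ and each $p_i\in S^{2}$. By construction $P_i^{*}=\frac{1}{\cos w_i}p_i$, matching the dual-point formula of the paper verbatim. Setting $\mathcal{G}=\{\tilde c_1,\dots,\tilde c_n\}$ with $\tilde c_i$ the spherical circle of center $p_i$ and radius $w_i$, the spherical Laguerre Voronoi diagram generated by $\mathcal{G}$ has dual point set $\mathcal{G}^{*}=\{v_1,\dots,v_n\}$. Since Theorem \ref{Thm3DCase} guarantees every $v_i$ is a vertex of $\mathrm{conv}(\mathcal{G}^{*})$ (and not an interior point), no $P_i^{*}$ is hidden, so by the criterion quoted just before the theorem statement no Laguerre cell $L_i$ is empty.

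The only subtle point is ensuring that the convex configuration supplied by Theorem \ref{Thm3DCase} indeed has each $v_i$ as a genuine vertex of the hull, rather than lying on the relative interior of an edge or face. Inspecting the construction in that theorem, the points are placed on distinct latitudes of concentric spheres along longitudes of a spherical grid emanating from different directions, so each $v_i$ is extremal; if one wants to be fully rigorous, a tiny perturbation of the angular positions (as in the strictly-convex refinement of Theorem \ref{ThmStrictlyConv234}) makes all $v_i$ vertices of a simplicial convex polyhedron without altering the radii. This is the step I expect to require the most care in the write-up, but it is a direct extraction from the proof already given rather than a new difficulty.
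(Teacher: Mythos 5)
Your proposal follows essentially the same route as the paper: set $r_i = 1/\cos w_i$, invoke Theorem \ref{Thm3DCase} to obtain a convex configuration realizing these radii, identify the resulting points with the dual points $P_i^{*}$, and conclude non-emptiness from the fact that no dual point is hidden inside the hull. Your additional attention to whether each point is a genuine vertex of the hull (rather than lying in the relative interior of a face) is a worthwhile refinement that the paper's own, terser proof passes over silently.
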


\begin{proof}
For the set of spherical circle radii $W=\{w_1, ..., w_n\}$, each radius corresponds to the radius $r_i=1/\cos(w_i)$. Remark that $r_i\geq 1$ by the assumption of the spherical circle radius. Therefore, we generate a set of radii $\mathcal{R}=\{r_1, ..., r_n\}$.

By Theorem \ref{Thm3DCase}, there exists a convex configuration of a set $\mathcal{P}=\{p_1, ..., p_n\}$ with respect to $\mathcal{R}$. Therefore, all of dual points in $\mathcal{G}^*$ are on the corner of the convex hull of $\mathcal{G}^*$. That is, the spherical Laguerre Delaunay diagram with respect to $\mathcal{G}^*$ consists all of generators $\mathcal{P}=\{p_1, ..., p_n\}$. 

Hence, it implies there exists a spherical Laguerre Voronoi diagram satisfying non-emptiness property with respect to the given set of radii as desired.
\end{proof}

\section{Concluding Remarks}
We consider the convex configuration problem of $n$ points when the radii which are measured from the fixed point are given. In the two-dimensional case, we have proved that the strictly convex configuration always exists when all radii are distinct or each radius is repeated at most four points. However, the problem is still open when repeated radii are greater than or equal to five points. Therefore, we leave a conjecture to prove this interesting property.

\textbf{Conjecture:} \textit{For any set of given radii $\mathcal{R}$, it is not always to find the convex configuration with respect to the given set $\mathcal{R}$.}

However, the existence of a convex configuration is guaranteed in the case of the three-dimensional space. Using this fact, we can apply the existence of a convex configuration to the existence of the spherical Laguerre Voronoi diagram satisfying the non-emptiness property.

\begin{acknowledgements}
We would like to thank Masaki Moriguchi and Vorapong Suppakitpaisan for some discussions. We also thank the Japan Student Services Organization (JASSO) for the FYI2018 follow-up research fellowship to support the stay of the first author in Japan during this study. This research was supported by Chiang Mai University, Thailand.

\end{acknowledgements}



\end{document}